\numberwithin{equation}{section}
\newtheorem{lemma}{Lemma}
\newtheorem{theorem}{Theorem}
\begin{document}
\title{Experimentally feasible computational advantage from quantum superposition of gate orders}

\author{Martin J. Renner,}
    %\email[Correspondence email address: ]{martin.renner@univie.ac.at}% Your name
    \email{martin.renner@univie.ac.at}% Your name
    \affiliation{University of Vienna, Faculty of Physics, Vienna Center for Quantum Science and Technology (VCQ), Boltzmanngasse 5, 1090 Vienna, Austria}
    \affiliation{Institute for Quantum Optics and Quantum Information (IQOQI), Austrian Academy of Sciences, Boltzmanngasse 3, 1090 Vienna, Austria}

\author{\v{C}aslav Brukner}
    %\email{caslav.brukner@univie.ac.at}
    \affiliation{University of Vienna, Faculty of Physics, Vienna Center for Quantum Science and Technology (VCQ), Boltzmanngasse 5, 1090 Vienna, Austria}
    \affiliation{Institute for Quantum Optics and Quantum Information (IQOQI), Austrian Academy of Sciences, Boltzmanngasse 3, 1090 Vienna, Austria}
   % Your name

\date{\today}

\begin{abstract}
In an ordinary quantum algorithm the gates are applied in a fixed order on the systems. The introduction of indefinite causal structures allows to relax this constraint and control the order of the gates with an additional quantum state. It is known that this quantum-controlled ordering of gates can reduce the query complexity in deciding a property of black-box unitaries with respect to the best algorithm in which the gates are applied in a fixed order. However, all tasks explicitly found so far require unitaries that either act on unbounded dimensional quantum systems in the asymptotic limit (the limiting case of a large number of black-box gates) or act on qubits, but then involve only a few unitaries. Here we introduce tasks (1) for which there is a provable computational advantage of a quantum-controlled ordering of gates in the asymptotic case and (2) that require only qubit gates and are therefore suitable to demonstrate this advantage experimentally. We study their solutions with the quantum-$n$-switch and within the quantum circuit model and find that while the $n$-switch requires to call each gate only once, a causal algorithm has to call at least $2n-1$ gates. Furthermore, the best known solution with a fixed gate ordering calls $O(n\log_2{(n)})$ gates.
\end{abstract}

\maketitle

\section{Introduction}
Causality is one of the most fundamental concepts in science and deeply embedded in the concept of computation. In ordinary quantum algorithms, represented within the quantum circuit model, the gates act in a fixed order on the systems. However, the study of causality at the intersection between quantum mechanics and gravity within the last two decades \cite{hardy2005probability, Zych_2019} suggested that quantum computation can be extended to more general scenarios, in which the order of the gates is controlled with an additional quantum state \cite{Chiribella_2013, Oreshkov_2012}. The use of indefinite causal structures provide numerous advantages in the field of quantum information. For instance, they lead to an exponential reduction for certain communication tasks \cite{Guerin_2016} and offer advantages in channel discrimination tasks \cite{bavaresco2020strict}. Moreover, they allow to transfer information through zero-capacity channels \cite{PhysRevLett.120.120502, salek2018quantum, chiribella2018indefinite, Guo2020, goswami2020}, although the same effect appears in causal circuits \cite{abbott2018communication, Gu_rin_2019, Rubino_2021}. Beside the theoretical interest of indefinite causal structures, including the study of the computational complexity \cite{Ara_jo_2017, Baumeler_2018}, they were  experimentally demonstrated in enhanced tabletop experiments~\cite{Procopio_2015, Rubino_2017, rubino2017experimental, Goswami_2018, Guerin_experiment, taddei2020experimental}.

The simplest example of an indefinite causal structure is the quantum-$n$-switch. Here, any permutation of the $n$ unitaries can be applied on the target system but the order in which these unitaries are applied depends on the state of an additional quantum system. For example, in the case of the quantum-2-switch, a qubit controls whether the gate $U_0$ is applied before or after another gate $U_1$. It is known that using these structures one can decide whether the two gates $U_0$ and $U_1$ commute or anticommute with a single call to each gate. Solving the same task within the standard quantum circuit model, however, requires to call at least one gate twice \cite{Chiribella_2012}. This effect has also been experimentally demonstrated by Procopio et al. \cite{Procopio_2015}. In this way, the use of indefinite causal structures allows for an advantage in the number of gates that has to be called (queries).

A generalization of this task to $n$ unitary gates, originally introduced in Araújo et al. \cite{1} and often called Fourier promise problems (FPP), can be solved with the quantum-$n$-switch and a single call to each gate. At the same time, the best known solution with a causal algorithm calls $O(n\log_2{(n)})$ gates \cite{renner2021reassessing}. This result suggests that a quantum computer with a quantum-controlled ordering of gates require asymptotically fewer resources than a quantum computer with a fixed gate ordering to solve the same task. Unfortunately, the physical conditions to achieve this advantage are very demanding: for the tasks with $n$ unitaries the dimension of the control and target systems must be at least $n!$. This makes it virtually impossible to demonstrate this computational advantage experimentally. For this reason, another generalization of the task to more unitary gates has been proposed and experimentally demonstrated (for $n=4$) by Taddei et al. \cite{taddei2020experimental}. These problems, called Hadamard promise problems (HPP), offer an advantage by using the quantum-$n$-switch compared to causal circuits as well, but most importantly require only qubits. However, so far only one task of this class with four gates is explicitly known, and it remained open whether this advantage is preserved in the limiting case of a large number of black-box gates.

Here we generalize these tasks to an arbitrary number of unitary gates and show that they (1) provide a provable gap in query complexity between a quantum-controlled ordering of gates and causal quantum circuits in the asymptotic case, and (2) require only qubit gates. In fact, while all of these tasks can be solved with the quantum-$n$-switch and a single call to each gate, we prove that a causal algorithm requires at least $2n-1$ calls to the gates. Furthermore, we show that the best known techniques with a fixed gate ordering require $O(n\log_2{(n)})$ queries and conjecture that no better causal solution exists. Our findings allow to verify experimentally the scalable computational advantage of indefinite causal structures.
%Our findings are the first explicit computational problems for which (1) there is an expected asymptotic separation between quantum computation with an indefinite causal order of gates and the best causal algorithm and (2) that require only qubits. The later point makes it suitable to implement these tasks in practice and verify the computational advantage experimentally.

\section{The Hadamard Promise Problem} \label{HPP}
In the Hadamard promise problem, originally introduced in Ref.~\cite{taddei2020experimental}, a set of $d$-dimensional unitary gates $\{U_i\}_0^{n-1}$ is given and certain permutations of these unitaries are chosen. These permutations are denoted by $\Pi_x$ where the index $x$ ranges from $0$ to $n_x-1$ and $n_x\leq n!$ is the number of selected permutations. It is promised that for some $y\in \{0,1,...,n_x-1\}$ the following relations hold:
\begin{align}
    \forall x\in \{0,1,...,n_x-1\}:  \ \Pi_x=s(x,y)\cdot \Pi_0 \, .
    \label{promise}
\end{align}
Here, the coefficients $s(x,y)$ form a $n_x \times n_x$ Hadamard matrix, an orthogonal matrix whose entries are either $+1$ or $-1$. More formally, $s(x,y)\in \{+1,-1\}$ and the rows are pairwise orthogonal to each other:\footnote{To avoid confusion, we want to mention that we label the columns with $x$ and the rows with $y$.}
\begin{align}
\begin{split}
    \forall y,y'&\in\{0,1,...,n_x-1\}:\\ &\sum^{n_x-1}_{x=0} s(x,y)\cdot s(x,y')=n_x\cdot \delta_{y, y'} \, .
\end{split}
\end{align}
The task is to find the value~$y$ for which these promises are satisfied.
%Note, for later purpose, that the first column of such a Hadamard matrix has only entries $+1$ since it corresponds to the promise $\Pi_0=\Pi_0$.\\

The simplest HPP involves two black-box unitaries $U_0$ and $U_1$. For the two permutations  $\Pi_0=U_1U_0$ and $\Pi_1=U_0U_1$ it is promised that $\Pi_x=s(x,y)\  \Pi_0$ where $s(x,y)=(-1)^{x\cdot y}$. While the promise for $x=0$ becomes $\Pi_0=\Pi_0$, which is trivially satisfied, for $x=1$ it translates into:
\begin{align}
    U_0U_1=(-1)^{y}\cdot U_1U_0 \, .
\end{align}
Hence, the two gates either commute ($y=0$) or anticommute ($y=1$) and the task is to find out which property is the correct one. As already mentioned in the introduction, it is known that this task can be solved with the quantum-2-switch by calling each gate only once, while in any causal quantum algorithm at least one gate has to be called twice \cite{Chiribella_2012}.\\

\begin{table}[H]
\centering
\begin{tabular}{|c||c|c||cc|cc|}\hline
\multirow{2}{*}{\backslashbox{$y$}{$x$}}
&$x=0$&$x=1$&\multicolumn{3}{c}{Examples}&\\
&$(\Pi_{0}=\Pi_{0})$&$(\Pi_{1}=(-1)^{y}\ \Pi_{0})$&$U_0$&&$U_1$&\\\hline\hline
$y=0$ & 1 & 1 &$\sigma_x$&&$\sigma_x$&\\    \hline
$y=1$ & 1 & -1 &$\sigma_y$&&$\sigma_x$&\\    \hline
\end{tabular}
\caption{The Hadamard matrix for the simplest HPP in which two unitaries either commute ($y=0$) or anticommute ($y=1$). The task is to find the correct value of~$y$.}
\label{tabex1}
\end{table}

\begin{comment}
\begin{table}[H]
    \centering
    \begin{tabular}{cc|c||cc}
    
            $U_0$&$U_1$&$y$& 
            
            \begin{sideways}$\Pi_{0}$\end{sideways}&
            \begin{sideways}$\Pi_{1}=(-1)^{y}\Pi_{0}$\end{sideways}\\\hline \hline
$X$&$X$& 0  & 1  & 1\\
$Y$&$Z$& 1  & 1  & -1\\
    \end{tabular}
\caption{The Hadamard matrix for the simplest HPP in which two unitaries either commute or anticommute. The task is to find the correct property.}
\end{table}

\begin{table}[H]
    \centering
    \begin{tabular}{|c||cc||cc|}\hline
    %\begin{sideways}$\Pi_{x}=s(x,y)\ \Pi_{0}$\end{sideways}
    &\begin{sideways}$\Pi_{0}=\Pi_{0}$\end{sideways}&\begin{sideways}$\Pi_{1}=(-1)^{y}\Pi_{0}$\end{sideways}&\begin{sideways}Unitaries for\end{sideways}&\begin{sideways}this promise\end{sideways}\\\hline
 \backslashbox{y}{x} & 0 & 1& $U_0$& $U_1$ \\\hline\hline
0 & 1 & 1 &$\sigma_x$&$\sigma_x$ \\
1 & 1 & -1 &$\sigma_y$&$\sigma_z$\\\hline
\end{tabular}
    \caption{Caption}
    \label{tab:my_label}
\end{table}
\end{comment}

\section{Generalizing HPPs}\label{secourmethod}
For higher $n$ only a few explicit HPPs are known. In this work, we will introduce a procedure that allows us to find a HPP for any number of involved black-box gates. The main idea is that we can combine two HPPs each with $m$ and $n$ ($d$-dimensional) unitary gates into another HPP with $m+n-1$ ($d$-dimensional) unitary gates. To do so, we denote the $m_x$ permutations of the $m$ unitaries in the first HPP with $\Pi^{(1)}_{x_1}$ such that they satisfy the following promises:
\begin{align}
    \forall x_1\in \{0,1,...,m_x-1\}:  \ \Pi^{(1)}_{x_1}&=s_1(x_1, y_1)\cdot \Pi^{(1)}_{0} \, .
\end{align}
In the second HPP there are $n$ involved $d$-dimensional black-box unitaries and the $n_x$ permutations, denoted as $\Pi^{(2)}_{x_2}$, satisfy the following promises:
\begin{align}
   \forall x_2\in \{0,1,...,n_x-1\}:  \ \Pi^{(2)}_{x_2}&=s_2(x_2, y_2)\cdot \Pi^{(2)}_{0} \, .
\end{align}
Now we choose one of the $m$ unitaries from the first HPP and replace this unitary in each of the permutations $\Pi^{(1)}_{x_1}$ with $\Pi^{(2)}_{x_2}$. In this way, we obtain $n_x\cdot m_x$ new permutations that we label with $\Pi_{(x_1,x_2)}$. One can observe that these new permutations satisfy the following relations:
\begin{align}
    \Pi_{(x_1,x_2)}&=s_2(x_2,y_2)\cdot \Pi_{(x_1,0)}\\
    &=s_2(x_2,y_2)\cdot s_1(x_1,y_1)\cdot  \Pi_{(0,0)}\, .
\end{align}
Since $s_1(x_1,y_1)$ and $s_2(x_2,y_2)$ form an $m_x \times m_x$ and $n_x \times n_x$ Hadamard matrix, respectively, the resulting matrix with entries $s((x_1,x_2),(y_1,y_2)):=s_2(x_2,y_2)\cdot s_1(x_1,y_1)$ is a $(m_x\cdot n_x)\times (m_x\cdot n_x)$ Hadamard matrix. We prove this formally in Appendix~\ref{appA}. Hence, we have obtained another HPP with $m+n-1$ involved ($d$-dimensional) unitary black-box gates.

To give an example, we can consider the simplest HPP in Table~\ref{tabex1} with two involved unitaries. Let $U_0$ and $\tilde{U}_1$ be the unitaries for which it is promised that they either commute ($y_1=0$) or anticommute ($y_1=1$). The permutations $\Pi^{(1)}_{x_1}$ read then:
\begin{align}
    \Pi^{(1)}_{x_1=0}&=\tilde{U}_1U_0\\
    \Pi^{(1)}_{x_1=1}&=U_0\tilde{U}_1=(-1)^{y_1}\cdot \tilde{U}_1U_0 \, .
\end{align}
Now we can take another instance of the same HPP with $\Pi^{(2)}_{x_2=0}=U_2U_1$ and $\Pi^{(2)}_{x_2=1}=U_1U_2$ such that the two unitaries $U_1$ and $U_2$ again either commute ($y_2=0$) or anticommute ($y_2=1$):
\begin{align}
    U_1U_2=(-1)^{y_2}\ U_2U_1 \, .
\end{align}
Replacing now $\tilde{U}_1$ in both of the permutations $\Pi^{(1)}_{x_1=0}=\tilde{U}_1U_0$ and $\Pi^{(1)}_{x_1=1}=U_0\tilde{U}_1$ once with $\Pi^{(2)}_{x_2=0}=U_2U_1$ and once with $\Pi^{(2)}_{x_2=1}=U_1U_2$, we obtain in total four permutations for which the following promises hold:
\begin{align}
    \Pi_{(0,0)}=U_2U_1U_0& \, ,&\label{exam31} \\ 
    \Pi_{(0,1)}=U_1U_2U_0&=(-1)^{y_2}&U_2U_1U_0 \, ,\\
    \Pi_{(1,0)}=U_0U_2U_1&=(-1)^{y_1}&U_2U_1U_0 \, ,\\
    \Pi_{(1,1)}=U_0U_1U_2&=(-1)^{y_1+y_2}&U_2U_1U_0 \, .\label{exam34}
\end{align}
We illustrate in Table~\ref{tab3} that these relations form indeed a $4\times 4$ Hadamard matrix. In a next step, one could split one of these three unitaries into another pair of either commuting or anticommuting unitaries. In this way, one would obtain an HPP with four unitaries and eight permutations.
\begin{comment}
For example if we split $U_2$ into $U'_2$ and $U'_3$ such that $U'_2U'_3=(-1)^{y_3}\ U'_3U'_2$ with $y_3\in\{0,1\}$ (and replace the prime afterwards), they read as follows:
\begin{align}
    \Pi_{0,0,0}=U_3U_2U_1U_0&&&\\
    \Pi_{1,0,0}=U_3U_2U_0U_1&=(-1)^{y_1}&U_3U_2U_1U_0&\\
    \Pi_{0,1,0}=U_1U_0U_3U_2&=(-1)^{y_2}&U_3U_2U_1U_0&\\
    \Pi_{1,1,0}=U_0U_1U_3U_2&=(-1)^{y_1+y_2}&U_3U_2U_1U_0&\\\\
    \Pi_{0,0,1}=U_2U_3U_1U_0&=(-1)^{y_3}&U_3U_2U_1U_0&\\
    \Pi_{1,0,1}=U_2U_3U_0U_1&=(-1)^{y_1+y_3}&U_3U_2U_1U_0&\\
    \Pi_{0,1,1}=U_1U_0U_2U_3&=(-1)^{y_2+y_3}&U_3U_2U_1U_0&\\
    \Pi_{1,1,1}=U_0U_1U_2U_3&=(-1)^{y_1+y_2+y_3}&U_3U_2U_1U_0& \, .
\end{align}
Note that one could also have split either $U_0$ or $U_1$ into two commuting or anticommuting unitaries, which would lead to another inequivalent HPP.
\end{comment}
Following this, we obtain a HPP for every number of unitary black-box gates $n$ with $n_x=2^{n-1}$ permutations and therefore a Hadamard matrix of dimension $2^{n-1}\times 2^{n-1}$. Note, however, that we are not restricted to split a unitary into a pair of commuting or anticommuting unitaries, but replacing a unitary by any set of permutations that form a HPP by themselves is possible.

\begin{table}[H]
    \centering
\begin{tabular}{|c||c|c|c|c||c|c|cc|}\hline
\multirow{2}{*}{\backslashbox{$(y_1,y_2)$}{$(x_1,x_2)$}} &$x=$&$x=$&$x=$&$x=$& \multicolumn{3}{c}{Examples}&\\
& $(0,0)$& $(1,0)$& $(0,1)$& $(1,1)$&$U_0$& $U_1$& $U_2$& \\\hline\hline
$y=(0,0)$ & 1 & 1 & 1 & 1 &$\sigma_x$&$\sigma_x$&$\mathds{1}$& \\\hline
$y=(0,1)$ & 1 & 1 & -1 & -1 &$\sigma_x$&$\frac{\sigma_y+\sigma_z}{\sqrt{2}}$&$\frac{\sigma_y-\sigma_z}{\sqrt{2}}$& \\\hline
$y=(1,0)$ & 1 & -1 & 1 & -1 &$\sigma_y$&$\sigma_x$&$\mathds{1}$& \\\hline
$y=(1,1)$ & 1 & -1 & -1 & 1 &$\sigma_y$&$\frac{\sigma_y+\sigma_z}{\sqrt{2}}$&$\frac{\sigma_y-\sigma_z}{\sqrt{2}}$ &\\ \hline
\end{tabular}
    \caption{The Hadamard matrix for the HPP given in \eqref{exam31}-\eqref{exam34} (for short: $\Pi_{(x_1,x_2)}=(-1)^{x_1\cdot y_1+x_2\cdot y_2}\  \Pi_{(0,0)}$). For every possible combination of the parameters $y=(y_1,y_2)$ a set of unitaries that satisfy the promise is given.
    %(Note, that these are not only rotations of Pauli matrices.)
    }
    \label{tab3}
\end{table}
%can be realised
To show that these tasks are indeed realisable, one has to prove that unitaries that satisfy these promises exist. It turns out that for many tasks of this class this can be done by a straightforward approach. For instance, we obtained the examples in Table~\ref{tab3} by simply replacing the examples of $U_1=\sigma_x$ in Table~\ref{tabex1} with a pair of unitaries that either commute (if $y_2=0$) or anticommute (if $y_2=1$) and whose product is proportional to the original unitary~$U_1=\sigma_x$:
\begin{align}
    U_1&=\sigma_x\ \xrightarrow{y_2=0}\  U_1=\sigma_x&&U_2=\mathds{1} \\
    U_1&=\sigma_x\ \xrightarrow{y_2=1}\  U_1=\frac{\sigma_y+\sigma_z}{\sqrt{2}}&&U_2=\frac{\sigma_y-\sigma_z}{\sqrt{2}}
\end{align}
In this sense, we obtain the examples for the task with $n+1$ unitaries from the examples for the task with $n$ unitaries. Since there are some subtleties with this procedure, we discuss this further in Appendix~\ref{secexistence}.

\section{Solution with the quantum-n-switch}

\begin{figure}[hbt!]
\centering
\includegraphics[width=0.5\textwidth]{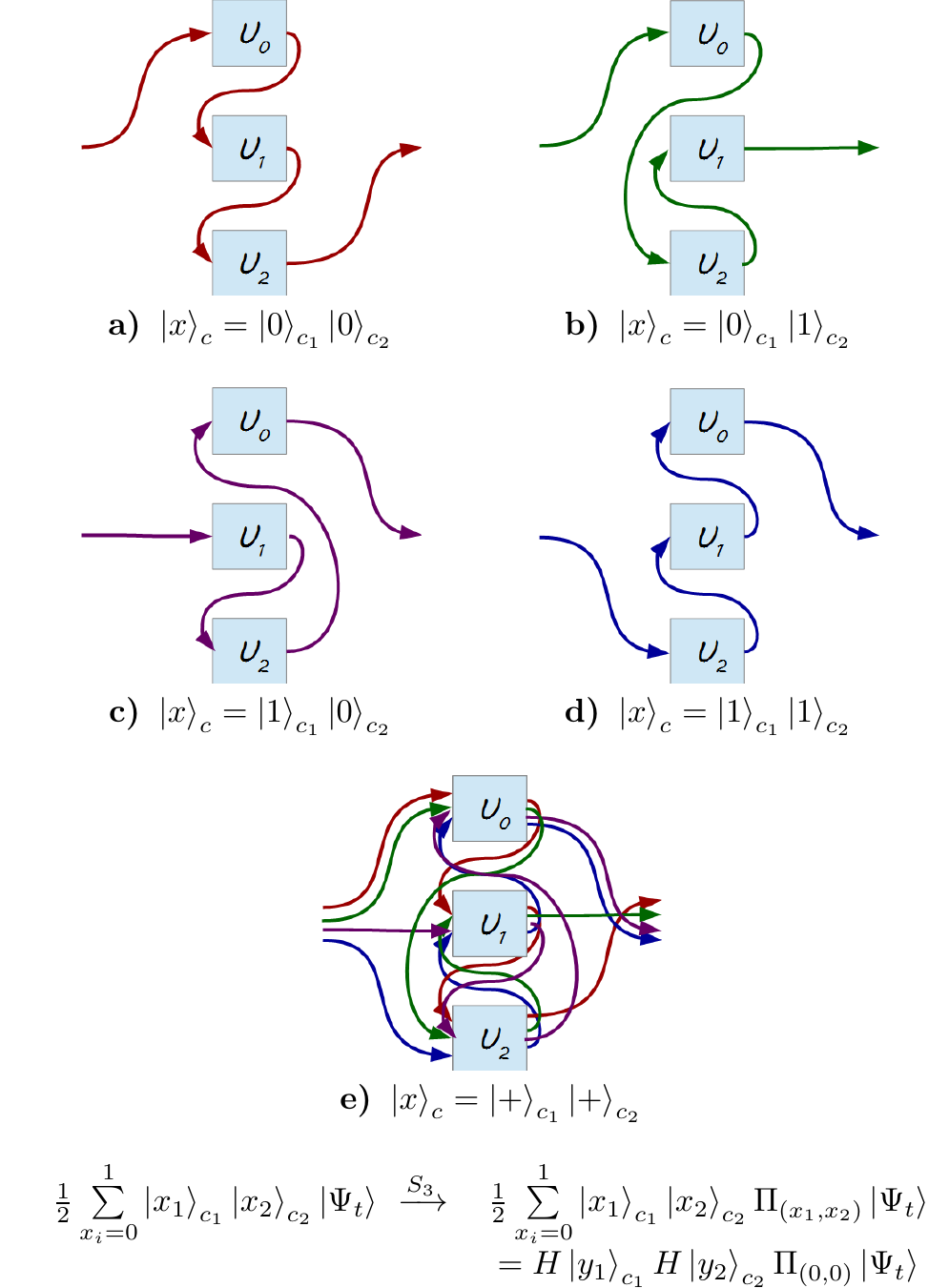}
    \caption{Solving the HPP in Table~\ref{tab3} with the 3-switch: The state of the control system $\ket{x}_c=\ket{x_1}_{c_1} \ket{x_2}_{c_2}$ determines in which order the gates are applied on the target system. If the control system is initialized in a superposition, the quantum-$3$-switch can be used to solve this HPP by calling each unitary $U_i$ only once.
    %(The equality holds since $\Pi_{(x_1,x_2)}=(-1)^{x_1\cdot y_1+x_2\cdot y_2}\  \Pi_{(0,0)}$ for this HPP.)
    }
    \label{fig:my_label}
\end{figure}

As pointed out in Ref.~\cite{taddei2020experimental}, every HPP (independent of whether it is constructed using our method or otherwise) can be solved with the quantum-$n$-switch and a single call to each gate. The quantum-$n$-switch is denoted here as $S_n$. It is the quantum gate that applies the permutation $\Pi_x$ on the target system~$\ket{\Psi_t}$ whenever the control system is in the state $\ket{x}$:
\begin{align}
    \forall x\in\{0,1,...,n_x-1\}:\ S_n\ket{x}_c\otimes\ket{\Psi_t}=\ket{x}_c\otimes \Pi_x\ket{\Psi_t} \, .
\end{align}
Moreover, to every Hadamard matrix $s(x,y)$ we associate the corresponding unitary transformation $H_{n_x}$ that is defined as:
\begin{align}
    \forall y\in\{0,1,...,n_x-1\}:\ H_{n_x}\ket{y}=\frac{1}{\sqrt{n_x}}\sum_{x=0}^{n_x-1}s(x,y)\ket{x} \, .
\end{align}

To solve HPPs, the $n_x$-dimensional control system is first transformed into an equal superposition of all states $x\in\{0,1,...,n_x-1\}$, usually by applying a Hadamard transformation to all control qubits.
%This can be done by applying the Hadamard transform $H_{n_x}$ introduced above on the control system that is initialized in the state $\ket{0}_c$.\footnote{Note that $s(0,y)=+1$ for every $y\in \{0,1,...,n_x-1\}$ since it corresponds to the promise $\Pi_0=s(0,y)\cdot \Pi_0$.}
Meanwhile, the target system $\ket{\Psi_t}$ is initialized in an arbitrary $d$-dimensional state:
\begin{align}
    \left(\frac{1}{\sqrt{n_x}}\sum_{x=0}^{n_x-1}\ket{x}_c\right)\otimes\ket{\Psi_t} \, .
\end{align}
\begin{comment}
\begin{align}
    \left(H_{n_x}\ket{0}_c\right)\otimes \ket{\Psi_t}=\left(\frac{1}{\sqrt{n_x}}\sum_{x=0}^{n_x-1}\ket{x}_c\right)\otimes\ket{\Psi_t} \, .
\end{align}
\end{comment}
Now, if the $n$-switch is applied, depending on the state $\ket{x}$ of the control system, the permutation $\Pi_x$ is applied on the target system~$\ket{\Psi_t}$ (see Fig.~\ref{fig:my_label} for an illustration of the map for the case of $n=3$):
\begin{align}
    S_n\left(\frac{1}{\sqrt{n_x}}\sum_{x=0}^{n_x-1}\ket{x}_c\right)\otimes\ket{\Psi_t}=\frac{1}{\sqrt{n_x}}\sum_{x=0}^{n_x-1}\ket{x}_c\otimes\Pi_x\ket{\Psi_t} \, .
\end{align}
With the promise $\Pi_x=s(x,y)\cdot \Pi_0$, this state can be rewritten into:
\begin{align}
\begin{split}
    \frac{1}{\sqrt{n_x}}&\sum_{x=0}^{n_x-1}\ket{x}_c\otimes\Pi_x\ket{\Psi_t}\\
    &=\left(\frac{1}{\sqrt{n_x}}\sum_{x=0}^{n_x-1}s(x,y)\ket{x}_c\right)\otimes\Pi_0\ket{\Psi_t} \, .
    \label{switcheq1}
\end{split}
\end{align}
In this way, the target system always ends up in the state $\Pi_0\ket{\Psi_t}$ (independent of~$x$) and factorizes out. Observe that the final state of the control system is precisely $H_{n_x}\ket{y}_c$. Hence, applying the inverse (transposed) Hadamard transform $H^{-1}_{n_x}$ on the control system, we obtain:
\begin{align}
    H_{n_x}^{-1}\left(\frac{1}{\sqrt{n_x}}\sum_{x=0}^{n_x-1}s(x,y)\ket{x}_c\right)\otimes\Pi_0\ket{\Psi_t}=\ket{y}_c\otimes\Pi_0\ket{\Psi_t} \, .\label{switcheq2}
\end{align}
In this way, the solution~$y$ can be read out by a measurement of the control system in the computational basis. In the $n$-switch each unitary is called exactly once. Hence, the total query complexity of this algorithm is $n$.

\begin{comment}
\newcommand{\te}{S_n}
\begin{figure}
    \begin{center}
$\Qcircuit @C=0.5em @R=1em {
\lstick{\ket{0}_c}     & \gate{H_{n_x}} & \multigate{1}{\te} & \qw & \gate{H^{-1}_{n_x}} & \qw & \rstick{\ket{y}_c} \\
\lstick{\ket{\Psi_t}}  & \qw             & \ghost{\te}        & \qw & \qw             & \qw &  \rstick{\Pi_0 \ket{\Psi_t}}  }$
\end{center}
    \caption{The solution of every HPP with the quantum-$n$-switch: The Hadamard transform is used to prepare the control system in an equal superposition of all states $x$. With the quantum-$n$-switch $S_n$ the permutation $\Pi_x$, depending on the state of the control system $\ket{x}$, is applied on the target system. After the inverse Hadamard transform $H^{-1}_{n_x}$ acts on the control system, the solution~$y$ can be read out with a measurement in the computational basis.}
    \label{figsimswitch}
\end{figure}

\begin{align}
\begin{split}
    F_{n_x}^{-1}S_n&F_{n_x}\ket{0}_c\otimes \ket{\Psi_t}\\
    &=F_{n_x}^{-1}S_n\left(\frac{1}{\sqrt{n_x}}\sum_{x=0}^{n_x-1}\ket{x}_c\right)\otimes\ket{\Psi_t}\\
    &=F_{n_x}^{-1}\left(\frac{1}{\sqrt{n_x}}\sum_{x=0}^{n_x-1}\ket{x}_c\otimes\Pi_x\ket{\Psi_t}\right)\\
    &=F_{n_x}^{-1}\left(\frac{1}{\sqrt{n_x}}\sum_{x=0}^{n_x-1}\omega^{x\cdot y}\ket{x}_c\right)\otimes\Pi_0\ket{\Psi_t}\\
    &=\ket{y}_c\otimes\Pi_0\ket{\Psi_t} \, . \label{eqswitch}
\end{split}
\end{align}
\end{comment}

\section{Solution with causal quantum algorithms}\label{secsimswitch}

\begin{figure}[hbt!]
\smaller[1]
\begin{center}
$\Qcircuit @C=0.5em @R=1em {
\lstick{\ket{0}_{c_1}}      & \gate{H} & \ctrl{3} & \qw        & \ctrl{3} & \qw       & \qw        & \qw       & \qw        & \qw      & \qw        & \qw      & \ctrlo{3} & \qw        & \ctrlo{3} & \gate{H} & \qw & \rstick{\ket{y_1}_{c_1}}             \\
\lstick{\ket{0}_{c_2}}      & \gate{H} & \qw       & \qw        & \qw       & \ctrl{3} & \qw        & \ctrl{3} & \qw        & \ctrlo{3} & \qw        & \ctrlo{3} & \qw      & \qw        & \qw      & \gate{H} & \qw & \rstick{\ket{y_2}_{c_2}}             \\
\lstick{\ket{\Psi_t}} & \qw      & \qswap    & \gate{U_0} & \qswap    & \qswap    & \gate{U_1} & \qswap    & \gate{U_2} & \qswap   & \gate{U_1} & \qswap   & \qswap   & \gate{U_0} & \qswap   & \qw      & \qw & \rstick{\Pi_0\ket{\Psi_t}} \\
\lstick{\ket{a_0}}    & \qw      & \qswap    & \qw        & \qswap    & \qw       & \qw        & \qw       & \qw        & \qw      & \qw        & \qw      & \qswap   & \qw        & \qswap   & \qw      & \qw & \rstick{U_0\ket{a_0}} \\
\lstick{\ket{a_1}}    & \qw      & \qw       & \qw        & \qw       & \qswap    & \qw        & \qswap    & \qw        & \qswap   & \qw        & \qswap   & \qw      & \qw        & \qw      & \qw      & \qw & \rstick{U_1\ket{a_1}}}$
\end{center}
\normalsize
    \caption{Simulation of the four permutations $U_2U_1U_0$, $U_1U_2U_0$, $U_0U_2U_1$, $U_0U_1U_2$ involved in the HPP given in Table~\ref{tab3} with the smallest possible number of used black-box gates. A measurement of the control qubits at the end reveals the solution $y=(y_1, y_2)$.}
    \label{soln3}
\end{figure}
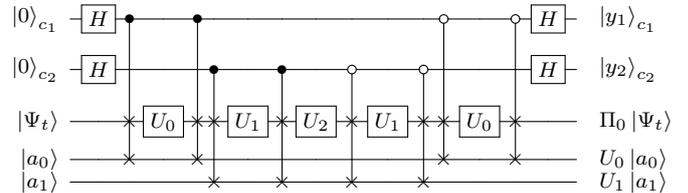

It is possible to simulate the quantum-$n$-switch with a causal algorithm and $O(n^2)$ calls to the black-box gates. Since every HPP can be solved with the quantum-$n$-switch, every simulation thereof (or more precisely the simulation of all involved permutations) can solve the same task as well. For a detailed study of the simulation of the quantum-$n$-switch we refer to Ref.~\cite{Facchini_2015} (but also Ref.~\cite{Colnaghi_2012, 1, renner2021reassessing}). For example, all permutations involved in the HPP given in Table~\ref{tab3} can be simulated with the algorithm in Fig.~\ref{soln3}. This is also the shortest possible solution since such an algorithm can be used to determine for each pair of the unitaries $U_0$, $U_1$ and $U_2$ whether the pair of unitaries commute or anticommute (by setting the remaining gate to $\mathds{1}$). Such a causal algorithm requires to call at least two of the three unitaries twice, hence at least five gates are called in total.

The idea can be extended to HPPs with a set of $n$ unitary gates. Each such problem contains as a subproblem the task of deciding for each pair of gates whether that pair commutes or anticommutes. This later problem requires a minimum number of queries and thus also determines a lower bound on the number of queries for the original problem. This is specified by the following lemma.

\begin{lemma}
Consider the class of all problems that can be generated from the HPP in Table~\ref{tabex1} with the method introduced in Section~\ref{secourmethod}. For every HPP (with a set of $n$ different gates) in that class a solution with a causal quantum algorithm has to call at least $2n-1$ unitary gates.
\end{lemma}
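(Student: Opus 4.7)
The plan is to reduce the HPP, for every pair of its $n$ gates, to the two-gate commute/anticommute problem, and then to combine the query lower bound of Ref.~\cite{Chiribella_2012} with a vertex cover argument on the complete graph $K_n$.

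First I would formalize the recursive construction of Section~\ref{secourmethod} as a binary tree with $n$ leaves (the basic unitaries $U_0,\ldots,U_{n-1}$) and $n-1$ internal nodes, each carrying one bit $y_l$ that records whether the two subtrees at that node commute ($y_l=0$) or anticommute ($y_l=1$). The base permutation $\Pi_0$ is obtained by reading the leaves in order, and flipping $x_l$ swaps the two subtrees rooted at node $l$, multiplying the resulting product by $(-1)^{y_l}$ through the commute/anticommute promise attached to that split.

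Next comes the key reduction. For any pair $(i,j)$ of leaves I would substitute $U_k=\mathds{1}$ for every $k\notin\{i,j\}$ and let $l^{*}$ be the least common ancestor of the leaves $i$ and $j$. A short case analysis on the position of an internal node $l\neq l^{*}$ relative to $l^{*}$ (strictly above it, off the root-to-$l^{*}$ path, or strictly below it on either side) shows that one of the two child subtrees of $l$ contains neither $U_i$ nor $U_j$; the swap at $l$ therefore acts only on identities and the Hadamard promise forces $y_l=0$. At $l^{*}$ itself, one child subtree contains $U_i$ and the other contains $U_j$ (each padded with identities), so the associated swap is effectively $U_iU_j\leftrightarrow U_jU_i$ and $y_{l^{*}}$ is exactly the commutation bit of $(U_i,U_j)$. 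Consequently, any causal algorithm that solves the HPP must, on this restricted input, correctly decide whether $U_i$ and $U_j$ commute.

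To finish, let $q_i$ denote the number of queries the causal algorithm makes to $U_i$ and set $S=\{i:q_i\geq 2\}$. Applying the two-gate lower bound of Ref.~\cite{Chiribella_2012} to every pair via the reduction above yields $q_i\geq 1$ for all $i$ and, for every pair $(i,j)$, $i\in S$ or $j\in S$; so $S$ is a vertex cover of $K_n$, which forces $|S|\geq n-1$. Summing gives
\begin{align*}
\sum_{i=0}^{n-1} q_i \;\geq\; 2|S|+\bigl(n-|S|\bigr)\;=\;n+|S|\;\geq\;2n-1.
\end{align*}
The hard part will be the pair reduction: I expect the only delicate point to be the structural lemma showing that at every node other than $l^{*}$ the sibling subtree taking part in the swap consists entirely of identities, so that the corresponding $y$-bit is forced to $0$. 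Once this is in place, the Chiribella lower bound and the minimum vertex cover of $K_n$ mechanically yield the $2n-1$ bound.
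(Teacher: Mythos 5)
Your proposal is correct and follows essentially the same route as the paper: reduce every pair of gates to the two-gate commute/anticommute problem by setting all remaining gates to $\mathds{1}$, invoke the lower bound of Ref.~\cite{Chiribella_2012} for each pair, and count. The only differences are presentational: the paper establishes the pair-reduction by induction on the recursive construction rather than by your binary-tree/least-common-ancestor analysis, and it states the final counting step (``at least $n-1$ gates must be called twice, hence at least $2n-1$ calls in total'') without spelling out the vertex-cover bookkeeping on $K_n$ that you make explicit.
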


\begin{proof}
We can show by induction, that every solution to that task must be able to determine for every pair of unitary gates whether that pair commutes or anticommutes, when we set all remaining gates to $\mathds{1}$. For the base case of $n=2$, we note that there is only the HPP given by Table~\ref{tabex1} itself for which the statement is clearly correct.

For the induction step, remember that any task with $n+1$ unitary gates is obtained by replacing one unitary $U_i$ from a task with $n$ gates with two unitaries that we denote here as $U^{(1)}_{i}$ and $U^{(2)}_{i}$. We can check that a solution to the new task must be able to determine for every pair of gates whether that pair commutes or anticommutes: (1) If the solution for the task with $n$ gates is able to determine for every pair of unitaries whether they commute or anticommute, a solution to the new task with $n+1$ gates is able to determine for every pair $U_j$ and $U_k$ with $j,k\neq i$ this property when we set $U^{(1)}_{i}=U^{(2)}_{i}=\mathds{1}$. (2) Similar, a solution to the new task is able to determine whether $U^{(1)}_{i}$ and $U_j$ (for every $j\neq i$) commute or anticommute when we set $U^{(2)}_{i}=\mathds{1}$. The analog argument holds for $U^{(2)}_{i}$ and every $U_j$ with $j\neq i$. (3) For the remaining pair of $U^{(1)}_{i}$ and $U^{(2)}_{i}$, this follows by construction of the task since part of the solution of the new task is exactly to determine whether $U^{(1)}_{i}$ and $U^{(2)}_{i}$ commute or anticommute. This proves the induction hypothesis.

Since a causal algorithm that is able to determine whether two gates commute or anticommute has to call at least one of the two gates twice \cite{Chiribella_2012}, this requires, in total, to call at least $n-1$ gates twice. Therefore, at least $2n-1$ gates have to be called in total.
\end{proof} 

However, we believe that for most tasks in that class a causal solution has to call more than $O(n)$ gates. To motivate our conjecture, we want to point out that a similar argument as above holds for a very simple HPP that contains only two permutations and is defined by:
\begin{align}
    \Pi_0&:=U_{n-1}U_{n-2}...U_{2}U_1U_0 \, ,\\
    \Pi_1&:=U_{0}U_{1}U_{2}...U_{n-2}U_{n-1} \, .
\end{align}
It is promised that $\Pi_1=(-1)^y\ \Pi_0$ and the task is to determine $y$. A solution to that HPP is able to determine for every pair of unitaries $U_j$ and $U_k$ whether they commute or anticommute. More precisely, if we set all remaining unitaries to $\mathds{1}$, the two permutations reduce to $\Pi_0=U_kU_j$ and $\Pi_1=U_jU_k$ (given that w.l.o.g. $j<k$) from which the statement follows. In general, however, a HPP with $n$ gates that is generated with our method contains many more permutations (in fact $2^{n-1}$) and is able to determine much more structure between the unitaries.

%For that reason, we consider it as unlikely that a more efficient solution then $O(n\log{n})$ exists for the asymptotic limit.

Therefore, we conjecture that, for small $n$, a simulation of all involved permutations is the most efficient causal solution. For larger $n$, methods similar to the ones introduced in Ref.~\cite{renner2021reassessing} can be used to find more efficient solutions. Indeed, we show in Appendix~\ref{secsimswitchgeneral} that all HPPs that we can generate with our method can be solved with a causal quantum algorithm and $O(n\log_2{(n)})$ calls to the black-box gates.\footnote{There are other known techniques to solve the same tasks. They are discussed in Ref.~\cite{taddei2020experimental} and it is argued there that they require more calls to the black-box gates than a simulation of all permutations.} While we conjecture that this is the most efficient causal solution, we want to mention that there might be other problems in this class (obtained different than with our method) that offer a larger advantage.

%For example, the fundamental HPPs can be the ones given in Table~\ref{tabex1} and Table~\ref{tabex2} and the one given in Table~\ref{tab3} is then a non-fundamental HPP generated out of two instances of the first fundamental HPP given in Table~\ref{tabex1}.

%To add: For the task with three unitaries there is a proof. One can generalize the idea to $2n-1$ but a asymptotic scaling is hard to prove. However, the best known methods are the shortest substring which seems to be the best solution for small $n$ and our method with $n\log{n}$ which shows an upper bound. There might be other problems in this class (obtained different than with our method) that might offer a larger advantage.

\section{Conclusion}
Indefinite causal structures can be used to solve certain tasks more efficiently than any causally ordered quantum algorithm. In this work, we generalized a specific class of problems that provide an advantage of using a superposition of different gate orderings in the asymptotic limit. These tasks are constructed for an arbitrary number of gates and are suitable for an experimental demonstration of this computational advantage as they only involve low dimensional target systems (qubits). We showed that, while all of these tasks can be solved with the quantum-$n$-switch and a single call to each gate, causal algorithms require more calls to the black-box unitaries. We want to mention that the simplest HPP with two commuting or anticommuting unitary gates can be translated to an exponential advantage for certain communication tasks in Ref.~\cite{Guerin_2016}. We believe that our generalization of that task leads to advantages for (multipartite) communication tasks as well.

%It would be interesting to investigate further to which other advantages these tasks can lead, as for example the simplest task with two commuting or anticommuting unitaries can be translated into an exponential advantage for a certain communication task in Ref.~\cite{Guerin_2016}.

Furthermore, we found that all of these tasks can be solved with a causal algorithm and $O(n \log_2{(n)})$ calls to the black-box gates. We want to point out that currently there is no known task for which the advantage in the number of gates that has to be called is larger then $O(n)$ (for indefinite causal structures) versus $O(n \log_2{(n)})$ (for causal quantum circuits). This raises the important challenge of finding computational tasks for which indefinite causal structures provide a more significant advantage.

\section*{Acknowledgements}
%We thank Mateus Araújo, Philippe Allard Guérin, Ämin Baumeler and for insightful discussions and useful comments on the manuscript. 
We  acknowledge  financial  support  from the  Austrian  Science  Fund  (FWF)  through  BeyondC (F7103-N38), the project no. I-2906, as well as support by  the  John Templeton Foundation through grant 61466, The Quantum Information Structure of Spacetime (qiss.fr), the Foundational Questions Institute (FQXi) and the research platform TURIS. The opinions expressed in this publication are those of the authors and do not necessarily reflect the views of the John Templeton Foundation. Furthermore, we are thankful for the tutorial on Q-circuit \cite{eastin2004qcircuit} that helped a lot to create the quantum circuits in \LaTeX.

\bibliography{bib}{}
\bibliographystyle{ieeetr}

\appendix

\begin{widetext}
\section{The product of two Hadamard matrices is another Hadamard matrix}\label{appA}
\begin{lemma}
If $s_1(x_1,y_1)$ and $s_2(x_2,y_2)$ are the entries of an $m_x \times m_x$ and $n_x \times n_x$ Hadamard matrix, then $s((x_1,x_2),(y_1,y_2)):=s_2(x_2,y_2)\cdot s_1(x_1,y_1)$ forms an $(m_x\cdot n_x) \times (m_x\cdot n_x)$ Hadamard matrix.
\end{lemma}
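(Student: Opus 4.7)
The plan is to verify the two defining properties of a Hadamard matrix for the proposed product entries: (i) every entry is $\pm 1$, and (ii) distinct rows are orthogonal, with the correct normalization $m_x \cdot n_x$. Since the index sets are Cartesian products $(x_1,x_2) \in \{0,\dots,m_x-1\}\times\{0,\dots,n_x-1\}$ and similarly for $(y_1,y_2)$, the resulting matrix automatically has the expected size $(m_x n_x)\times(m_x n_x)$; this is nothing but the Kronecker product of the two Hadamard matrices, so the whole claim reduces to a known structural fact which I just unpack directly from the definitions.

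For step (i), I simply observe that $s_1(x_1,y_1),\,s_2(x_2,y_2)\in\{+1,-1\}$ by assumption, hence their product $s((x_1,x_2),(y_1,y_2))=s_1(x_1,y_1)\cdot s_2(x_2,y_2)$ is again in $\{+1,-1\}$. This needs no further work.

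For step (ii), I would fix two row indices $(y_1,y_2)$ and $(y_1',y_2')$ and compute the inner product of the corresponding rows by summing over the column index $(x_1,x_2)$. The key move is that the summand factorizes, so the double sum splits into a product of two single sums:
\begin{align*}
\sum_{x_1=0}^{m_x-1}\sum_{x_2=0}^{n_x-1} & s((x_1,x_2),(y_1,y_2))\, s((x_1,x_2),(y_1',y_2')) \\
&= \left(\sum_{x_1=0}^{m_x-1} s_1(x_1,y_1)\, s_1(x_1,y_1')\right)\left(\sum_{x_2=0}^{n_x-1} s_2(x_2,y_2)\, s_2(x_2,y_2')\right).
\end{align*}
Applying the Hadamard property of $s_1$ and $s_2$ separately, each factor equals $m_x\,\delta_{y_1,y_1'}$ and $n_x\,\delta_{y_2,y_2'}$ respectively, so the product is $m_x n_x\, \delta_{(y_1,y_2),(y_1',y_2')}$, which is exactly the required orthogonality relation for an $(m_x n_x)\times(m_x n_x)$ Hadamard matrix.

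There is no real obstacle here; the only thing worth flagging carefully is the bookkeeping of the compound indices, making sure that the pair $(y_1,y_2)=(y_1',y_2')$ if and only if both component Kronecker deltas are nonzero, which justifies replacing $\delta_{y_1,y_1'}\delta_{y_2,y_2'}$ by a single delta on the compound index. Combining (i) and (ii) completes the proof.
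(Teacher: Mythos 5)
Your proof is correct and follows essentially the same route as the paper's: verify the entries are $\pm 1$, then factorize the double sum over $(x_1,x_2)$ into a product of the two single-index orthogonality sums, yielding $m_x n_x\,\delta_{(y_1,y_2),(y_1',y_2')}$. The observation that this is just the Kronecker product construction, and the care about the compound Kronecker delta, match the paper's argument exactly.
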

\begin{proof}
Since $s_1(x_1,y_1)$ and $s_2(x_2,y_2)$ form Hadamard matrices, we know that $s_1(x_1,y_1),s_2(x_2,y_2)\in\{+1,-1\}$ from which we conclude that $s((x_1,x_2),(y_1,y_2))\in\{+1,-1\}$. Furthermore, since $s_1(x_1,y_1)$ and $s_2(x_2,y_2)$ form orthogonal matrices, we know:
\begin{align}
\begin{split}
    \forall y_1,y'_1\in\{0,1,...,m_x-1\}:\ \sum^{m_x-1}_{x_1=0} s_1(x_1,y_1)\cdot s_1(x_1,y'_1)=m_x\cdot \delta_{y_1, y'_1} \, .
\end{split}
\end{align}
For $s_2(x_2,y_2)$ the analog expression holds. From this we can calculate directly that $s((x_1,x_2),(y_1,y_2))$ forms an orthogonal matrix as well. In fact, two rows are orthogonal to each other:
\begin{align}
\begin{split}
    \forall (y_1,y_2), (y'_1,y'_2)\in \{0,1,...,m_x-1\}\times \{0,1,...,&n_x-1\}:\\
    \sum^{m_x-1}_{x_1=0}\sum^{n_x-1}_{x_2=0} s((x_1,x_2),(y_1,y_2))\cdot s((x_1,x_2),(y'_1, y'_2))&=\sum^{m_x-1}_{x_1=0}\sum^{n_x-1}_{x_2=0} s_1(x_1,y_1)\cdot s_2(x_2,y_2)\cdot s_1(x_1,y'_1)\cdot s_2(x_2,y'_2)\\
    &=\left(\sum^{m_x-1}_{x_1=0}s_1(x_1,y_1)\cdot s_1(x_1,y'_1)\right)\cdot\left(\sum^{n_x-1}_{x_2=0} s_2(x_2,y_2)\cdot s_2(x_2,y'_2)\right)\\
    &=m_x\cdot n_x\cdot \delta_{y_1,y'_1}\cdot \delta_{y_2,y'_2}\\
    &=(m_x\cdot n_x)\cdot \delta_{(y_1,y_2),(y'_1,y'_2)} \, .
\end{split}
\end{align}
Hence $s((x_1,x_2),(y_1,y_2))$ forms an orthogonal matrix whose entries are either $+1$ or $-1$, a Hadamard matrix.
\end{proof}

\section{Existence of unitaries that satisfy the promise}\label{secexistence}
As already mentioned in the main text, given that examples of unitaries for the task with $n$ gates exist, examples for the task with $n+1$ unitaries (in which one unitary is replaced by a pair of either commuting or anticommuting unitaries) can be found. More formally, if the unitary $U_i$ in the original HPP is of the form $U\sigma_zU^\dagger$ (for an arbitrary 2-dimensional unitary $U$) and should be replaced by a pair of commuting unitaries, one can choose for instance $U\sigma_zU^\dagger$ and $\mathds{1}$:
\begin{align}
    U\sigma_zU^\dagger&=\mathds{1}\cdot U\sigma_z U^\dagger=U\sigma_z U^\dagger \cdot \mathds{1} \, .
\end{align}
On the other hand, if $U_i$ should be replaced with two anticommuting unitaries, one can choose for example $U\sigma_x U^\dagger$ and $U(i\sigma_y) U^\dagger$ since:
\begin{align}
    U\sigma_zU^\dagger&=U\sigma_x U^\dagger \cdot U(i\sigma_y)U^\dagger=-U(i\sigma_y) U^\dagger \cdot U\sigma_xU^\dagger \, .
\end{align}
Note that such a replacement is not unique, since we can choose $U\left(\frac{\sigma_x+\sigma_y}{\sqrt{2}}\right)U^\dagger$ and $U\left(\frac{\sigma_x-\sigma_y}{\sqrt{2}}\right)U^\dagger$ as well. More precisely, replacing $U\sigma_zU^\dagger$ by $UV\sigma_x V^\dagger U^\dagger$ and $UV(i\sigma_y) V^\dagger U^\dagger$ with $V\sigma_z V^\dagger=\sigma_z$ is allowed. (Intuitively speaking, $V$ is a rotation in the $x$-$y$-plane that leaves the $z$ direction invariant.)

%For experiments, it is more convenient to use unitaries that have dimension $2\times 2$. Unfortunately the same argument as above does not always work. For higher $n$ and certain parameter combinations, it is not possible to find $2\times 2$ unitaries that satisfy these promises. 

Nevertheless, for certain parameter combinations a problem appears with this method. Take for instance the step in which $U\sigma_zU^\dagger$ is replaced by $U\sigma_zU^\dagger$ and $\mathds{1}$. In a next step, it is impossible to replace $\mathds{1}$ by a pair of anticommuting unitaries since there are no $2\times 2$ unitaries $U_1$ and $U_2$ such that:
\begin{align}
    \mathds{1}=U_1\cdot U_2=-U_2\cdot U_1 \, .
\end{align}
(However, replacing $\mathds{1}$ by two commuting unitaries is clearly possible, for example $U\sigma_zU^\dagger$ and $U\sigma_zU^\dagger$.) Therefore, if we are only using combinations of the most simplest HPP in Table~\ref{tabex1} (replacing a unitary step by step with pairs of commuting or anticommuting unitaries), for certain parameter combinations no examples of 2-dimensional unitaries can be found. For the task itself this only implies that certain solutions $y$ are impossible. For the other parameter combinations, unitaries that satisfy the promises can still be found. We are not giving a thorough analysis of which parameter combinations are impossible since this also depends on the details of the HPP and which specific unitary $U_i$ is replaced. However, the impossibility of finding examples for certain solutions seems to be rare (especially for small $n$) and we present in the next subsection a way to circumvent this issue by changing the underlying HPP.

\subsection{Changing the underlying HPP}
\begin{table}[H]
\centering
\begin{tabular}{|c||c|c||c|c|cc|}\hline
\multirow{2}{*}{\backslashbox{$y$}{$x$}}
&$x=0$&$x=1$&\multicolumn{3}{c}{Examples}&\\
&$(\Pi_{0}=\Pi_{0})$&$(\Pi_{1}=(-1)^{y}\ \Pi_{0})$&$U_0$&$U_1$&$U_2$&\\\hline\hline
$y=0$ & 1 & 1 &$\sigma_y$&$\sigma_z$&$\sigma_z$&\\    \hline
$y=1$ & 1 & -1 &$\sigma_x$&$\sigma_y$&$\sigma_z$&\\    \hline
\end{tabular}
\caption{The Hadamard matrix for the HPP with $\Pi_0:=U_2U_1U_0$ and $\Pi_1:=U_0U_1U_2$. It is promised that $\Pi_x=(-1)^{x\cdot y}\ \Pi_0$ and the task is to find $y$. Using our method we obtain HPPs for an arbitrary (odd) number of unitary qubit gates. Note that one can also combine this HPP with the one in Table~\ref{tabex1}. For example, by replacing one of the three unitaries with a pair of commuting or anticommuting unitaries, we obtain a HPP with four unitary gates and four permutations.}
\label{tabex2}
\end{table}

\begin{figure}[hbt!]
\smaller[1]
\begin{center}
$\Qcircuit @C=0.5em @R=1em {
\lstick{\ket{0}_{c}}      & \gate{H} &  \ctrl{2}      &   \qw      & \ctrl{2}       & \ctrl{3} & \qw        & \ctrl{3} & \qw        & \ctrlo{3} & \qw        & \ctrlo{3} & \ctrlo{2}      & \qw        & \ctrlo{2}      & \gate{H} & \qw & \rstick{\ket{y}_{c}}             \\
\lstick{\ket{\Psi_t}} & \qw      & \qswap    & \gate{U_0} & \qswap    & \qswap    & \gate{U_1} & \qswap    & \gate{U_2} & \qswap   & \gate{U_1} & \qswap   & \qswap   & \gate{U_0} & \qswap   & \qw      & \qw & \rstick{\Pi_0\ket{\Psi_t}} \\
\lstick{\ket{a_0}}    & \qw      & \qswap    & \qw        & \qswap    & \qw       & \qw        & \qw       & \qw        & \qw      & \qw        & \qw      & \qswap   & \qw        & \qswap   & \qw      & \qw & \rstick{U_0\ket{a_0}} \\
\lstick{\ket{a_1}}    & \qw      & \qw       & \qw        & \qw       & \qswap    & \qw        & \qswap    & \qw        & \qswap   & \qw        & \qswap   & \qw      & \qw        & \qw      & \qw      & \qw & \rstick{U_1\ket{a_1}}}$
\end{center}
\normalsize
    \caption{Most efficient causal solution of the HPP in Table~\ref{tabex2} based on the simulation of the two permutations $\Pi_0=U_2U_1U_0$ and $\Pi_1=U_0U_1U_2$. In comparison with the algorithm in Fig.~\ref{soln3}, this algorithm uses only one control qubit instead of two which might be interesting for an experimental realisation. This causal solution is also the one with the smallest number of black-box calls since such an algorithm must be able to determine for each pair of unitary gates whether that pair commutes or anticommutes (by setting the remaining gate to $\mathds{1}$). This requires to call at least two of the three gates twice (see also Section~\ref{secsimswitch}).}
    \label{soln3b}
\end{figure}
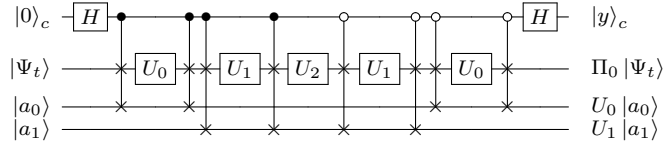

Consider the HPP with three unitaries in which $\Pi_0:= U_2U_1U_0$ and $\Pi_1:= U_0U_1U_2$ such that the two permutations satisfy the promise $\Pi_1=(-1)^y\cdot \Pi_0$ (see Table~\ref{tabex2}). We show that for every HPP that can be generated out of that HPP using our method, it is possible to find a set of 2-dimensional unitaries that satisfy the promise for every parameter combination. As one can see in Table~\ref{tabex2}, for the original HPP with three gates one can find examples of unitary gates that are only of the form $U\sigma_z U^\dagger$. Replacing a unitary of that form by three unitaries that satisfy the promise of the same HPP for $y=0$ is always possible. We can take for instance $(U \sigma_z U^\dagger)$, $(U \sigma_x U^\dagger)$ and $(U \sigma_x U^\dagger)$:
\begin{align}
    U\sigma_z U^\dagger &=(U \sigma_z U^\dagger)\cdot ( U \sigma_x U^\dagger) \cdot ( U \sigma_x U^\dagger) =(U \sigma_x U^\dagger)\cdot ( U \sigma_x U^\dagger) \cdot ( U \sigma_z U^\dagger) \, .
\end{align}
Similar if the three unitaries shall satisfy the promise for $y=1$ we can replace $U\sigma_z U^\dagger$ by $\mathds{1}$, $U\sigma_x U^\dagger$ and $U(i\sigma_y) U^\dagger$:
\begin{align}
    U\sigma_z U^\dagger &=\mathds{1}\cdot ( U \sigma_x U^\dagger) \cdot ( U (i\sigma_y) U^\dagger) =-( U (i\sigma_y) U^\dagger)\cdot ( U \sigma_x U^\dagger) \cdot \mathds{1} \, .
\end{align}
The difference is now that it is also possible to replace a unitary of the form $\mathds{1}$ into three unitaries that satisfy the promise by themselves. For $y=0$, we can take for instance $U \sigma_x U^\dagger$, $U \sigma_x U^\dagger$ and $\mathds{1}$:
\begin{align}
    \mathds{1} &= (U \sigma_x U^\dagger) \cdot (U \sigma_x U^\dagger) \cdot \mathds{1} =\mathds{1} \cdot (U \sigma_x U^\dagger) \cdot (U \sigma_x U^\dagger) \, .
\end{align}
For the case of $y=1$, we can replace $\mathds{1}$ by $U \sigma_x U^\dagger$, $U (i\sigma_y) U^\dagger$ and $U \sigma_z U^\dagger$:
\begin{align}
    \mathds{1} &= (U \sigma_x U^\dagger) \cdot (U (i\sigma_y) U^\dagger) \cdot (U \sigma_z U^\dagger) =-(U \sigma_z U^\dagger) \cdot (U (i\sigma_y) U^\dagger) \cdot (U \sigma_x U^\dagger) \, .
\end{align}
Note, however, that $U \sigma_x U^\dagger$ and $U (i\sigma_y) U^\dagger$ are again of the form $U' \sigma_z U'^\dagger$ for an appropriate choice of $U'$ and in a next step these unitaries can be replaced again with three unitaries that satisfy the promise of that HPP by themselves. Therefore, the existence of examples for all possible solutions $y$ of the HPP with $n$ unitaries implies, by induction, the existence of examples for all solutions $y$ of the resulting HPP with $n+2$ gates.

%Hence, by using the same arguments it is possible to replace these unitaries by three unitaries that satisfy the promise either for $y=0$ or for $y=1$.

\section{Solution with causal quantum algorithms}\label{secsimswitchgeneral}
\begin{figure}[h!]
\centering
\includegraphics[width=0.97\textwidth]{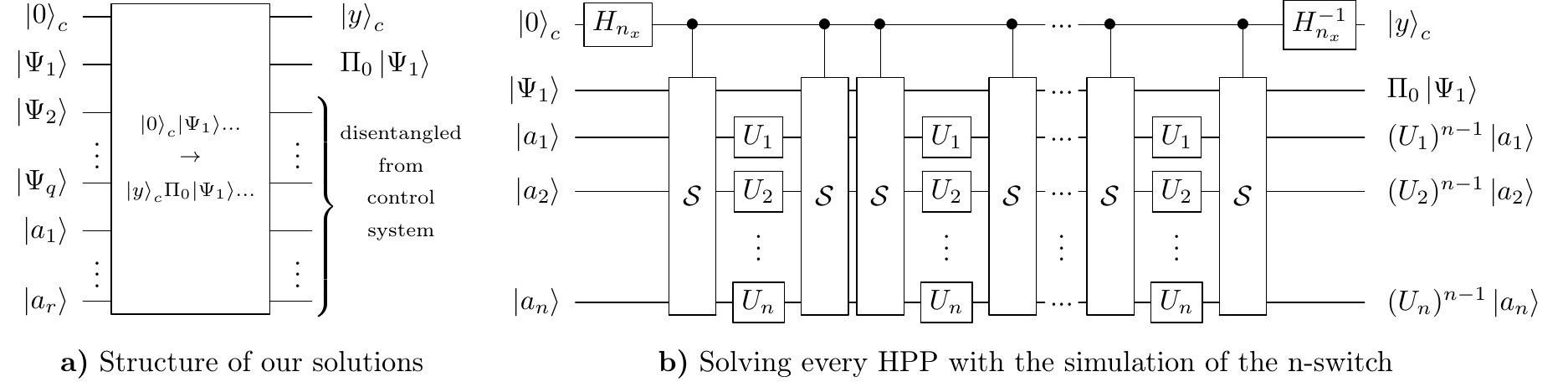}
    \caption{The structure of the algorithms we construct here is given in Fig.~\ref{figgeneral} a): All target and auxiliary systems are initialized in an arbitrary $d$-dimension state. After the algorithm is applied, the control system ends up in the state $\ket{y}_c$ from which the solution can be read out by a measurement in the computational basis. In addition, it is important for our proof that the first target system ends up in the state $\Pi_0\ket{\Psi_1}$ where $\Pi_0$ is the identity permutation of the corresponding HPP. One such algorithm is based on the simulation of the quantum-$n$-switch and given in Fig.~\ref{figgeneral} b): The Hadamard transform $H_{n_x}$ maps the initial state of the control system to an equal superposition of all considered permutations. Afterwards, the permutation $\Pi_x=U_{\sigma_x(n)}...U_{\sigma_x(2)}U_{\sigma_x(1)}$ is applied on $\ket{\Psi_1}$ by swapping the target system $\ket{\Psi_1}$ in each step $i=1,2,...,n$ with the corresponding auxiliary system $\ket{a_{\sigma_x(i)}}$. Since each auxiliary system $\ket{a_i}$ is swapped with the target system exactly once, the gate $U_i$ acts on $\ket{a_i}$ exactly $n-1$ times and ends up in the state $(U_i)^{n-1}\ket{a_i}$, independent of the state of the control system. Due to the promise $\Pi_x=s(x,y)\ \Pi_0$, the final state of the control and target system can be rewritten into $\ket{y}_c\otimes \Pi_0\ket{\Psi_1}$ (same calculation as in Eq.~\eqref{switcheq1} and Eq.~\eqref{switcheq2}) as required for the algorithm in Fig.~\ref{figgeneral} a). (To avoid confusion, we want to mention that we label the unitaries in this section (for convenience) with $1,2, ..., n$.)}
    \label{figgeneral}
\end{figure}

In this section, we will show that all HPPs that we can generate with our method from a finite set of HPPs (we call them ``fundamental'' here) can be solved with a causal quantum algorithm and $O(n\log_2{(n)})$ calls to the black-box gates. The fundamental HPP can be for example only the one given in Table~\ref{tabex1} and then the HPP given in Table~\ref{tab3} is an example of a (non-fundamental) task in that class. One can also consider the class of all tasks generated by the two fundamental HPPs given in Table~\ref{tabex1} \emph{and} Table~\ref{tabex2} which contains more tasks. Also other HPPs, not explicitly stated in this work, can be included.

\begin{theorem}
A finite set of HPPs is given and we consider the class of problems that can be generated from these fundamental HPPs with the method introduced in Section~\ref{secourmethod}. Let $k_{max}$ be the number of unitary gates contained in the fundamental HPP with the most gates and let $C:=2\cdot (k_{max}-1)$. For every problem (with a set of $n$ unitary gates) in that class there exists a causal quantum algorithm that solves this task by calling at most $C\cdot n\cdot \log_2{(n)}$ gates.
\end{theorem}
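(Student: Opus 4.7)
The plan is to proceed by induction on the construction tree of~$\mathcal{H}$, whose internal nodes are fundamental HPPs of size $\le k_{max}$ and whose leaves are the $n$ black-box unitaries, adapting the divide-and-conquer scheme of Ref.~\cite{renner2021reassessing} (shown there to give $O(n\log_2 n)$ queries for the Fourier promise problem) to the Hadamard setting.

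At each internal node I would run a palindromic swap-and-call circuit simulating the corresponding fundamental HPP---a direct generalisation of the $k=2$ and $k=3$ circuits in Fig.~\ref{soln3b} and Fig.~\ref{soln3}---which has a slot-budget of $2k-1\le C+1$ calls per node. Whenever that circuit wants to invoke a slot, the algorithm instead recursively calls the inductive algorithm for the sub-HPP occupying it, interfaced in the Fig.~\ref{figgeneral}(a) style so that it applies the sub-permutation $\Pi^{(j)}_{x_j}$ to the target and returns its auxiliaries in a control-independent state. The base case $n\le k_{max}$ is handled by the fundamental simulation alone, giving $\le 2(k_{max}-1)\cdot n\le Cn\log_2 n$ calls trivially, and a final inverse Hadamard on the composite control register followed by a computational-basis measurement reads out the solution~$y$.

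For the inductive step, I would first re-parenthesise the construction using the associativity of the Kronecker-product structure of the promise (Appendix~\ref{appA}) so that the decomposition at the root is balanced, i.e., every sub-HPP has size $n_j\le n/2$; this is a standard centroid argument on trees. One then obtains the recurrence
\[
T(n)\;\le\;\sum_{j=1}^{k}m_j\,T(n_j),\qquad \sum_{j=1}^{k}m_j\le 2k-1,\quad \sum_{j=1}^{k}n_j = n,
\]
with $m_j\in\{1,2\}$. The key trick, borrowed from Ref.~\cite{renner2021reassessing}, is to route the largest sub-HPP through the ``pivot'' slot (the unique slot called only once in the palindromic simulation) and to share auxiliary registers across the two calls each doubled slot makes, so that the recurrence reduces to a master-theorem form $T(n)\le T(n/2)+\sum_{j\ne\star}T(n_j)+O(n)$ which solves to $T(n)\le Cn\log_2 n$.

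The main obstacle will be the sharing of auxiliary registers and the accompanying algebraic bookkeeping that lands on exactly the advertised constant $C=2(k_{max}-1)$: a naive plug-in of the fundamental simulation at every level would instead give a polynomial bound of the form $T(n)=O(n^{\log_2 3})$, so exploiting the recursion-with-sharing technique of Ref.~\cite{renner2021reassessing} is essential to save the extra polynomial factor. A secondary obstacle is writing down the palindromic simulation explicitly for every fundamental HPP of size $\le k_{max}$ and checking its slot-budget; this should be a routine induction on $k$ starting from the two-gate swap-and-call motif of Fig.~\ref{soln3b}, but the associativity/re-parenthesisation step also needs to be carried out carefully so that the balanced decomposition really does correspond to the same HPP as the one given.
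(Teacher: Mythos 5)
Your high-level shape is right (induction over the construction, a pivot slot for the largest block, the $\log$ factor coming from the fact that the non-pivot blocks have size at most $n/2$), but the recurrence at the heart of your argument is set up in a way that does not close, and the devices you invoke to rescue it are either unjustified or unnecessary. The central issue: you make \emph{every} slot invocation of the top-level circuit a full recursive call $T(n_j)$, which is why you end up with multiplicities $m_j\in\{1,2\}$ and fear an $O(n^{\log_2 3})$ blow-up, and why you then need the vague ``share auxiliary registers across the two calls'' step. The paper avoids this entirely by separating the two jobs. (a) Each sub-value $y_i$ is found \emph{once}, by an independent recursive run of the algorithm for the sub-HPP, costing $C\, n_i\log_2 n_i$; these runs sit in the lower part of Fig.~\ref{algmonster} and are never repeated. (b) The remaining value $\tilde y$ is found by simulating permutations of the \emph{fixed} blocks $\Pi^{(i)}_0$, and applying $\Pi^{(i)}_0$ is just a product of $n_i$ gates costing $n_i$ raw queries --- not $T(n_i)$. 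Each non-pivot block is applied $2(k-1)\le C$ times and the pivot (largest) block once, so the extra cost is $\sum_{i\neq j}2(k-1)n_i$, which is exactly covered by the savings $C\sum_i n_i\log_2(n/n_i)\ge C\sum_{i\neq j}n_i$ since $n/n_i\ge 2$ for $i\neq j$. Without this separation your recurrence $T(n)\le\sum_j m_j T(n_j)$ with doubled slots genuinely does not give $O(n\log_2 n)$, and ``register sharing'' is not a substitute for it: you would have to explain why the second invocation of a doubled slot costs nothing, which amounts to rediscovering that only $\Pi^{(i)}_0$ (not the full $y_i$-finding circuit) needs to be re-applied.

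Two further points. First, the centroid re-parenthesisation is both unsupported and not needed: the construction of Section~\ref{secourmethod} nests permutations in a specific way, and it is not clear that the same set of permutations with the same promise can be regenerated from a re-rooted tree (the Kronecker structure of the Hadamard matrix in Appendix~\ref{appA} is associative, but the gate orderings are not freely re-associable). The paper needs only the automatic fact that every block other than the largest has $n_i\le n/2$, because $n\ge n_i+n_j$. Second, your claim that every fundamental HPP of size $k$ admits a palindromic simulation with slot budget $2k-1$ (one slot called once, the rest twice) is asserted by ``routine induction'' but not established, and the constant $C=2(k_{max}-1)$ in the theorem is not derived from such a budget: it comes from the paper's simulation of the $k$-switch by two $(k-1)$-switch simulations, in which each non-pivot block appears $2(k-1)$ times. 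Since $k\le k_{max}$ is bounded this does not affect the asymptotics, but it does mean your accounting would not land on the advertised constant even if the rest were repaired.
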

\begin{proof}
We construct for any task in that class a causal algorithm that has the form given in Fig.~\ref{figgeneral} a) and calls at most $C\cdot n\cdot \log_2{(n)}$ gates. More precisely, given that such an algorithm exists for every problem in that class with at most $n-1$ unitary gates, we construct a solution for the task with $n$ gates and the hypothesis follows by induction.\\

\textbf{Base case:}\\
For the base case, we consider all HPPs in that class that contain not more than $k_{max}$ gates (hence $2\leq n\leq k_{max}$). For these tasks, there exists a solution with a causal algorithm that calls $n^2$ gates in total. In fact, one can use the simulation of the quantum-$n$-switch given in Fig.~\ref{figgeneral} b). The hypothesis holds since $C\cdot n\cdot \log_2{(n)}=2\cdot (k_{max}-1)\cdot n\cdot \log_2{(n)}\geq 2\cdot (k_i-1)\cdot n\geq n^2$ (note that $k_{max}\geq n\geq 2$ and $\log_2{(n)}\geq 1$).\\ %Therefore, the claim is correct for every fundamental HPP.\\

\textbf{Induction step:}\\
%Now, for the induction step, we show that given that there is a causal algorithm of the form given in Fig.~\ref{figgeneral} for every HPP in this class with at most $n-1$ involved unitaries with $C\cdot (n-1)\cdot \log_2{(n-1)}$ queries, then there is a causal algorithm that solves every task of this class with $n$ involved gates with $C\cdot n\cdot \log_2{n}$ calls to the black-box gates.
Consider any HPP with $n$ involved black-box gates and denote the permutations involved in this task as $\Pi_x$ (with $x\in\{0,1,..., n_x-1\}$) and the promise as $\Pi_x=s(x,y)\ \Pi_0$. Since the HPP is constructed with our method, there is a fundamental HPP from which everything starts. Let's denote the identity permutation of this HPP as $\tilde{\Pi}_0$ and let $k\leq k_{max}$ be the number of unitaries in that HPP (note that, for convenience, we label these unitaries with $1,2,..., k$ instead of $0,1,..., k-1$):
\begin{align}
    \tilde{\Pi}_0=\tilde{U}_{k}\tilde{U}_{k-1}...\tilde{U}_2\tilde{U}_1 \, .
\end{align}
The permutations $\tilde{\Pi}_{\tilde{x}}=\tilde{U}_{\tilde{\sigma}_{\tilde{x}}(k)}...\tilde{U}_{\tilde{\sigma}_{\tilde{x}}(2)}\tilde{U}_{\tilde{\sigma}_{\tilde{x}}(1)}$ of that starting HPP are permutations of the unitaries $\tilde{U}_i$ and satisfy the following relations:
\begin{align}
    \tilde{\Pi}_{\tilde{x}}=\tilde{s}(\tilde{x},\tilde{y})\ \tilde{\Pi}_0 \, .
\end{align}
Now, by applying our method, each unitary $\tilde{U}_i$ is replaced, step by step, with the permutations of other HPPs. It is important to note that these permutations form by themselves a HPP of the same class (but with less unitaries $n_i<n$):
\begin{align}
    \forall i\in \{1,2,...,k\}: \tilde{U}_i \rightarrow \Pi^{(i)}_{x_i}=s_{i}(x_i, y_i)\ \Pi^{(i)}_{0} \, .
\end{align}
Therefore, we can write the label $x$ as $(\tilde{x}, x_1, x_2, ..., x_{k})$ and the permutation $\Pi_x$ is exactly obtained by taking $\tilde{\Pi}_{\tilde{x}}$ and replacing each unitary $\tilde{U}_i$ with the corresponding permutation $\Pi^{(i)}_{x_i}$. In this way, we obtain:
\begin{align}
    \Pi_{x}=\Pi_{(\tilde{x}, x_1, x_2, ..., x_{k})}=\tilde{s}(\tilde{x},\tilde{y})\cdot \left(\prod^{k}_{i=1}s_i(x_i,y_i)\right)\ \Pi_{0,0,0,...,0} &&\implies&& s(x,y)=\tilde{s}(\tilde{x},\tilde{y})\cdot \left(\prod^{k}_{i=1}s_i(x_i,y_i)\right)\, .
\end{align}
Hence, solving the task is equivalent to find all values of $\tilde{y}, y_1, y_2, ..., y_{k-1} \text{ and } y_{k}$. Since the permutations $\Pi^{(i)}_{x_i}=s_i(x_i,y_i)\ \Pi^{(i)}_{0}$ form a HPP of the same class with $n_i\leq n-1$ involved unitaries, there is, by the induction hypothesis, for each $i$ a causal algorithm that finds $y_i$ with $C\cdot n_i\cdot \log_2{(n_i)}$ queries.

To find the remaining value $\tilde{y}$, more work is required. Let $j$ be the index of the block that contains the most unitaries ($n_j=\max\limits_{1\leq i\leq k}{\{n_i \}}$). If this index is not unique, one can choose one of them. Since $\Pi_{\tilde{x},0,0,...,0}$ is a permutation of the blocks $\Pi^{(1)}_0$, $\Pi^{(2)}_0$, ..., $\Pi^{(k-1)}_0$, $\Pi^{(k)}_0$ and $\Pi_{\tilde{x},0,0,...,0}=\tilde{s}(\tilde{x}, \tilde{y})\ \Pi_{0,0,0,...,0}$ we are able to find $\tilde{y}$, when we are able to simulate all permutations of the blocks $\Pi^{(1)}_0$, $\Pi^{(2)}_0$, ..., $\Pi^{(k)}_0$. This is achieved in the upper part of the algorithm in Figure~\ref{algmonster} by a particular simulation of the quantum-$k$-switch build out of two simulations of the quantum-$(k-1)$-switch. Here, in each step $i$, depending on the state of the control system $\ket{\tilde{x}}$, the corresponding block $\Pi^{(\tilde{\sigma}_{\tilde{x}}(i))}_0$ (with $\tilde{\sigma}_{\tilde{x}}(i)\neq j$) is applied on the target system $\ket{\Psi^{(j)}_1}$ by swapping $\ket{\Psi^{(j)}_1}$ with the corresponding auxiliary system $\ket{\tilde{a}_{\tilde{\sigma}_{\tilde{x}}(i)}}$. At the point where the block $\Pi^{(j)}_0$ shall be applied on $\ket{\Psi^{(j)}_1}$, the algorithm in the middle is used to realize this transformation. The first part requires at most $k-1$ steps since there are at most $k-1$ blocks in $\Pi_{\tilde{x},0,0,...,0}$ before $\Pi^{(j)}_0$. Afterwards the same procedure is used to simulate all blocks $\Pi^{(i)}_0$ that appear after $\Pi^{(j)}_0$, which requires again at most $k-1$ steps. Since each auxiliary system $\ket{\tilde{a}_i}$ is swapped exactly once, it ends up in the state $(\Pi^{(i)}_0)^{2(k-1)-1}\ket{\tilde{a}_i}$, independent of the state of the control system $\ket{\tilde{x}}$.

%To understand how it works, note that in each permutation $\Pi_{\tilde{x},0,0,...,0}$ the block $\Pi^{(j)}_0$ appears exactly once. This divides the permutation $\Pi_{\tilde{x},0,0,...,0}$ into two parts. The part before and after $\Pi^{(j)}_0$ are both a permutation of a subset of the remaining blocks $\Pi^{(1)}_0$, $\Pi^{(2)}_0$, ..., $\Pi^{(j-1)}_0$, $\Pi^{(j+1)}_0$, ..., $\Pi^{(k)}_0$. These two permutations (before and after $\Pi^{(j)}_0$) are simulated with the simulation of the quantum-$(k-1)$-switch. 

In total, this algorithm consumes
\begin{align}
    \sum_{i=1}^{k} C\cdot n_i\cdot \log_2{n_i}+\sum_{i=1, i\neq j}^{k} 2\cdot (k-1) \cdot n_i\leq C\cdot n\cdot \log_2{n}
\end{align}
queries. The first term corresponds to the algorithms that find all values of $y_i$ and the second term comes from the simulation of all permutations of the blocks $\Pi^{(1)}_0$, $\Pi^{(2)}_0$, ..., $\Pi^{(k)}_0$ (each block contains $n_i$ black-box gates and appears $2\cdot (k-1)$ times, except $\Pi^{(j)}_0$ which does not appear at all). To see that this expression is smaller than $C\cdot n\cdot \log_2{n}$, we observe that:
\begin{align}
\begin{split}
    &C\cdot n\cdot \log_2{n}-\sum_{i=1}^{k} C\cdot n_i\cdot \log_2{n_i}=C\cdot \left(\sum_{i=1}^{k} n_i\right)\cdot \log_2{n}-C\cdot \sum_{i=1}^{k}  n_i\cdot \log_2{n_i}\\
    &=C\cdot \left(\sum_{i=1}^{k} n_i\cdot \log_2{\frac{n}{n_i}}\right)
    \geq C\cdot \left(\sum_{i=1, i\neq j}^{k} n_i\cdot \log_2{\frac{n}{n_i}}\right)\geq C\cdot \left(\sum_{i=1, i\neq j}^{k} n_i\cdot \log_2{2}\right)\\
    &= 2\cdot (k_{max}-1)\cdot \left(\sum_{i=1, i\neq j}^{k} n_i\right)\geq\sum_{i=1, i\neq j}^{k} 2\cdot (k-1) \cdot n_i \, .
\end{split}
\end{align}
Here, we have used that $\frac{n}{n_i}\geq 2$ for every $i\neq j$ (since $n_j\geq n_i$ and $n\geq n_j+n_i$). This concludes the proof.
\end{proof}

%\newcommand{\tex}{\substack{\text{Unitary that transforms}\\\sum\limits_{x=0}^{n!-1}\ket{x}_c\otimes \text{...}\rightarrow\sum\limits_{x=0}^{n!-1}\omega^{x\cdot y}\ket{x}_c\otimes \text{...}\\\text{using}\\U_jU_k=\omega^{x_{jk}\cdot y}\ U_kU_j}}

%So the algorithms that find the solutions $y_i$ for all $i\neq j$ work in parallel and are completely independent of each other and from the rest of the algorithm.

\begin{figure}
\begin{center}
\includegraphics[width=0.97\textwidth]{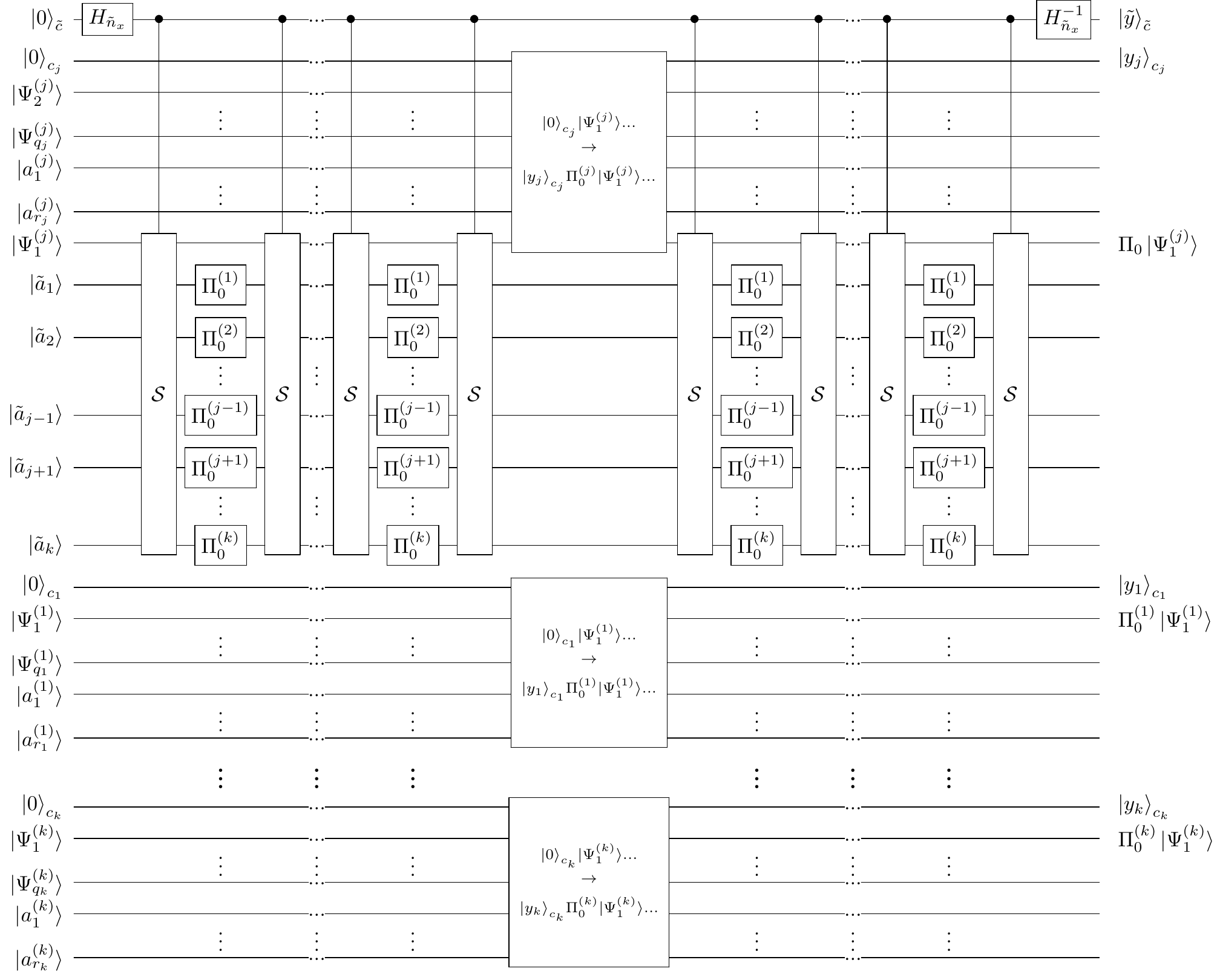}
\end{center}
    \caption{The algorithm that finds $y=(\tilde{y}, y_1, y_2, ..., y_k)$: The values $y_i$ for $i\neq j$ are found in the lower part of the algorithm completely independent of the rest (the index $j$ is skipped in the lower part). In the upper part, we simulate all possible permutations of the blocks $\Pi^{(1)}_0$, $\Pi^{(2)}_0$, ..., $\Pi^{(k)}_0$ (which is sufficient to determine $\tilde{y}$) by a construction that simulates the quantum-$k$-switch with two simulations of the quantum-$(k-1)$-switch. Here, in each step $i$ and depending on the state of the control system $\ket{\tilde{x}}$ the corresponding block $\Pi^{(\tilde{\sigma}_{\tilde{x}}(i))}_0$ (with $\tilde{\sigma}_{\tilde{x}}(i)\neq j$) is applied on $\ket{\Psi^{(j)}_1}$ by swapping that target system with $\ket{\tilde{a}_{\tilde{\sigma}_{\tilde{x}}(i)}}$. The block $\Pi^{(j)}_0$ is applied in the middle step. In this way, the target system $\ket{\Psi^{(j)}_1}$ ends up in the state $\Pi_0\ket{\Psi^{(j)}_1}$ and the solution $\tilde{y}$ can be read out in the control system.}
    \label{algmonster}
\end{figure}

\end{widetext}

\end{document}